\newcommand{\T}{{\top}}
\newcommand{\pldev}[2]{{\partial {#1} / \partial {#2}}}
\newcommand{\bm}[1]{{\boldsymbol{\mathrm{#1}}}}
\newtheorem{theorem}{Theorem}
\newtheorem{definition}{Definition}
\newtheorem{lemma}{Lemma}
\algrenewcommand\algorithmicrequire{\textbf{Input:}}
\algrenewcommand\algorithmicensure{\textbf{Output:}}
\algrenewcommand\algorithmicforall{\textbf{for each}}
\title{\LARGE \bf
On First Integrals of Hamiltonian System with Holonomic Hamiltonian
}
\author{Tomoyuki Iori
\thanks{This work was partly supported by JSPS KAKENHI Grant Number JP21K21285.}
\thanks{T. Iori is with the Department of Information and Physical Sciences, Graduate School of Information Science and Technology, Osaka University, 1-5 Yamadaoka, Suita, Osaka 565--0871, Japan {\tt\small t-iori@ist.osaka-u.ac.jp}}%
\thanks{This work has been submitted to the IEEE for possible publication. Copyright may be transferred without notice, after which this version may no longer be accessible.}%
}
\begin{document}

\maketitle
\thispagestyle{empty}
\pagestyle{empty}

\begin{abstract}
	In this study, the solution of the Hamilton-Jacobi equation (HJE) with holonomic Hamiltonian is investigated in terms of the first integrals of the corresponding Hamiltonian system. 
	Holonomic functions are related to a specific type of partial differential equations called Pfaffian systems, whose solution space can be regarded as a finite-dimensional real vector space. 
	In the finite-dimensional solution space, the existence of first integrals that define a solution of the HJE is characterized by a finite number of algebraic equations for finite-dimensional vectors, which can be easily solved and verified. 
	The derived characterization was illustrated through a numerical example. 
\end{abstract}

\section{Introduction \label{sec:intro}}
The Hamilton-Jacobi equation (HJE) is one of the most fundamental equations in the analysis and control of nonlinear systems. 
It appears in various problems, such as optimal control~\cite{Bryson1975,Anderson1979}, $H_\infty$ control~\cite{Schaft1992}, and balanced realization~\cite{Scherpen1993} of nonlinear systems. 
For linear time-invariant systems, the HJE is reduced to the algebraic Ricatti equation, a set of algebraic equations for an unknown matrix, which can be solved using various solution methods. 
However, for nonlinear systems, the HJE is formulated as a nonlinear partial differential equation (PDE), which is difficult to solve analytically. 
Numerical solution methods for the HJE have been studied based on various mathematical techniques, such as series expansion~\cite{Lukes1969}, expansion with basis functions~\cite{Beard1997}, and data-driven approximation~\cite{Ito2020}. 

Hao et al.~\cite{Hao2020} proposed a numerical solution method for the HJE by generating functions that consider the initial and terminal boundary conditions of optimal control problems. 
The Taylor-series expansion of a generating function up to any prescribed order is computed by solving first-order differential equations. 
The approximated generating function is used to find a family of optimal control for different boundary conditions. 
However, in these numerical approaches, the computational complexity rapidly increases with the state dimension, known as the curse of dimensionality. 

The differential-geometric approach is an effective strategy for analyzing the HJE~\cite{Sakamoto2002}.
From the perspective of symplectic geometry, the solutions of the HJE are identified with Lagrangian manifolds in the cotangent bundle of the state space. 
A Lagrangian manifold is characterized by a finite number of first integrals of the corresponding Hamiltonian system as a set of their common zeros. 
Hence, solving the HJE is reformulated as the problem of finding the appropriate first integrals of the corresponding Hamiltonian system. 
This characterization reduces the HJE to a set of ordinary differential equations called the Lagrange-Charpit system, which is still difficult to solve analytically. 

Moreover, algebraic approaches based on commutative algebra are promising techniques, especially for avoiding the curse of dimensionality. 
For time-invariant Hamiltonian, Ohtsuka~\cite{Ohtsuka2011} characterized the solutions of the HJE with algebraic gradients in terms of the existence of involutive zero-dimensional ideals in a polynomial ring; 
Kawano and Ohtsuka~\cite{Kawano2013a} extended these results to the case of time-varying Hamiltonian. 
It should be noted that the first integrals that define a Lagrangian manifold are closely related to involutive zero-dimensional ideals; 
Indeed, the existence of first integrals in the polynomial ring implies the existence of an involutive zero-dimensional ideal. 

Considering these differential-geometric and algebraic approaches, in this study, the HJE was analyzed from the perspective of the theory of $D$-modules, which is a field of mathematics that studies the algebraic structures of PDEs and their solution space~\cite{Coutinho1995}. 
In the theory of $D$-modules, the solution set of PDEs is characterized by an ideal in the noncommutative ring of differential operators, which can be computed using symbolic computation. 
The symbolic computation of differential operators has been intensively studied~\cite{Saito2000,Oaku2003,Kauers2011} and its applications can be found in statistics~\cite{Nakayama2011,Kume2018} and moment problems~\cite{Brehard2019}. 
Moreover, for specific ideals, called zero-dimensional ideals, the solutions are called holonomic functions and constitute a finite-dimensional real vector space. 
This finite dimensional property is promising for finding the first integrals because any first integral in the solution space can be determined by a finite number of parameters. 

In this paper, the first integrals of the Hamiltonian system are investigated through the symbolic computation of differential operators. 
First, it is assumed that all first integrals, including the Hamiltonian, that define a Lagrangian manifold are included in the solution space of a zero-dimensional ideal. 
This assumption allows us to associate the first integrals with vector-valued functions that satisfy a specific type of first-order PDEs, called a Pfaffian system. 
Moreover, any solution of the Pfaffian system is characterized by only the boundary value at a single point where the Pfaffian system is defined. 
This leads to the characterization of first integrals by a finite number of algebraic equations for the boundary value. 
Through this characterization, the first integrals can be determined by solving algebraic equations, if such first integrals exist. 

\paragraph*{Notations}
For the field of real numbers $\bm{R}$ and a vector of indeterminates $y = [y_1\ \cdots\ y_n]^\T$, $\bm{R}(y)$ denotes the field of rational functions in $y_1,\dots,y_n$ over $\bm{R}$. 
$\partial_y \coloneqq [\partial_{y_1}\ \cdots\ \partial_{y_n}]^\T$ denotes a vector of differential operators, where $\partial_{y_i} = \pldev{}{y_i}$. 
Here, $\partial_y$ and $\partial_{y_i}$ are abbreviated by $\partial$ and $\partial_i$, respectively, if $y$ is clearly specified according to the context. 
For a multi-index $\alpha = (\alpha_1, \dots, \alpha_n) \in \bm{Z}^n_{\ge 0}$, $y^\alpha$ and $\partial^\alpha$ denote the monomial $y_1^{\alpha_1} \cdots y_n^{\alpha_n}$ and differential operator $\partial_1^{\alpha_1} \cdots \partial_n^{\alpha_n}$, respectively. 
Let $\bm{R}[y]$ and $\bm{R}(y)$ be the set of all polynomials and rational functions in $y_1,\dots,y_n$, respectively. 
The set of all $(m \times l)$ matrices and $(m \times m)$ skew matrices with components in $\bm{R}(y)$ is denoted by $\bm{R}(y)^{m \times l}$ and $\mathrm{Skew}_m(\bm{R}(y))$, respectively. 
The symbol $\bm{R}(y)\langle \partial \rangle$ denotes the noncommutative ring of differential operators with coefficients in $\bm{R}(y)$. 
The action of a differential operator $\mathcal{P}$ on a sufficiently smooth function $F(y) = F(y_1,\dots, y_n)$ is denoted by $\mathcal{P} \bullet F(y)$; for instance, $\partial_i \bullet F(y) = \pldev{F}{y_i}(y)$. 
If $\mathcal{P} \bullet F = 0$, a differential operator $\mathcal{P}$ is said to \emph{annihilate} function $F$ and $F$ is a \emph{solution} of $\mathcal{P}$. 
The left ideal in $\bm{R}(y)\langle \partial \rangle$ generated by a finite number of differential operators $\mathcal{P}_1,\dots,\mathcal{P}_s$ is defined as $\langle \mathcal{P}_1, \dots, \mathcal{P}_s \rangle \coloneqq \{\mathcal{Q}_1\mathcal{P}_1 + \cdots + \mathcal{Q}_s\mathcal{P}_s \mid \mathcal{Q}_1,\dots,\mathcal{Q}_s \in \bm{R}(y)\langle \partial \rangle\}$. 
In this paper, the adjective ``left'' is omitted and it is referred to as an ``ideal'' because there is no right ideal.
If $\mathcal{P} \bullet F = 0$ for all $\mathcal{P} \in I$, the ideal $I \subset \bm{R}(y)\langle \partial \rangle$ is said to \emph{annihilate} function $F$ and $F$ is a \emph{solution} of $I$. 

\section{First Integrals of Hamiltonian Systems and Hamilton-Jacobi Equation\label{sec:problem}}
For a scalar-valued function $h(x, p)$, we consider the following first-order PDE for scalar-valued function $v(x)$: 
\begin{equation}
		h(x, p) = 0, \quad p = \nabla_x v(x), \label{eq:HJE}
\end{equation}
where $x \in \bm{R}^n$. 
The PDE~\eqref{eq:HJE} is called the HJE for the Hamiltonian $h$. 
The Hamiltonian system associated with $h$ is a dynamical system with $2n$ state variables $z = (x, p)$ defined as follows. 
\begin{equation}
	\begin{cases}
		\dot{x} &= \nabla_p h(x, p) \\
		\dot{p} &= -\nabla_x h(x, p). 
	\end{cases} \label{eq:Hsys}
\end{equation}
A first integral of the Hamiltonian system~\eqref{eq:Hsys} is a function that is constant along the trajectory of~\eqref{eq:Hsys}, i.e., a function $f$ that satisfies
\begin{equation*}
	\dot{f} = (\nabla_x f)^\T \dot{x} + (\nabla_p f)^\T \dot{p} = \{h, f\} = 0, 
\end{equation*}
where $\{\cdot, \cdot\}$ denotes the Poisson bracket, defined as
\[
\{f, g\} = (\nabla_p f)^\T \nabla_x g - (\nabla_x f)^\T \nabla_p g. 
\]

The first integrals of the Hamiltonian system are closely related to the solutions of the HJE. 
Let $\bar{z} = (\bar{x}, \bar{p})$ be a fixed point with $h(\bar{z}) = 0$ and $\pi$ be the projection to the first component, i.e., $\pi\colon z = (x, p) \mapsto x $. 
For a neighborhood $U \subset \bm{R}^{2n}$ of $\bar{z}$, a smooth function $v(x)$ defined on $\pi(U)$ is a solution of the HJE if and only if $h(z) = 0$ holds in the following subset
\[
\Lambda_v \coloneqq \{z = (x, p) \in U \mid p = \nabla_x v(x)\} \subset \bm{R}^{2n}. 
\]
By symplectic geometry, $\Lambda_V$ can be written as
\[
\Lambda_v = \{z \in U \mid f_1(z) = f_2(z) = \cdots = f_n(z) = 0 \}, 
\]
where $f_1 = h$ and $f_2,\dots,f_n$ are the first integrals of~\eqref{eq:Hsys} such that the following conditions hold (see \cite{Sakamoto2002} for details): 
\begin{align}
&\{ f_k(z), f_l(z) \} = 0 \quad (z \in U,\ 1 \leq k < l \leq n), \label{eq:poisson}\\
&\det[\nabla_p f_1(\bar{z})\ \cdots\ \nabla_p f_n(\bar{z})] \neq 0. \label{eq:projective}
\end{align}

Let us consider the problem to find analytic functions $f_2,\dots,f_n$ at $\bar{z}$ that satisfy the system of nonlinear PDEs~\eqref{eq:poisson} as well as the inequality condition~\eqref{eq:projective}, which is difficult to solve analytically. 
The set of all analytic functions at $\bar{z}$ comprises an infinite dimensional vector space; this is reflected in the series expansion approach and Galerkin method because they usually require an infinite number of parameters to exactly determine a solution of the HJE, which cannot be accomplished in finite time.

To overcome this difficulty, the notion of Pfaffian system associated with holonomic functions is used. 
The finite-dimensional solution space of a Pfaffian system allows us to exactly reduce the problem of finding analytic functions $f_2,\dots,f_n$ to that of determining finite dimensional vectors, which can be solved easily without approximations. 

\section{Holonomic Functions and Pfaffian Systems}
In this section, some notions related to the symbolic computation of differential operators are introduced, referring to~\cite{Saito2000,Hibi2013,Kauers2011,Kauers2013} for most of the definitions and lemmas. 

First, let us consider the notion of holonomic functions, where all first integrals are assumed to be included. 
Holonomic functions can be defined as the solution of zero-dimensional ideals. 
\begin{definition}[\cite{Hibi2013}] \label{def:zero}
	For an ideal $I \subset \bm{R}(y)\langle \partial \rangle$, let $\bm{R}(y)\langle \partial \rangle / I$ be the quotient space as $\bm{R}(y)$-linear spaces.  
	The ideal $I$ is said to be \textit{zero-dimensional} if $\bm{R}(y)\langle \partial \rangle / I$ is finite-dimensional. 
\end{definition}
\begin{definition}[\cite{Hibi2013}] \label{def:HF}
	A function $f(y)$ is said to be \textit{holonomic} if it is a solution of a zero-dimensional ideal in $\bm{R}(y)\langle \partial \rangle$. 
\end{definition}

In this paper, $h(x, p)$ is called the \textit{holonomic Hamiltonian} if it is holonomic as a function of $2n$ variables $z = (x, p)$. 
The following lemma can be used to verify whether a given function is holonomic. 
\begin{lemma}
	A function $f(y)$ with $y = [y_1\ \cdots\ y_m]^\T$ is holonomic if it is a solution of differential operators $\mathcal{P}_i$: 
	\begin{equation}
	\mathcal{P}_i \bullet f(y) = 0 \quad (i=1,\dots,m),
	\end{equation}
	where each $\mathcal{P}_i$ is defined as a finite sum of the following form: 
	\begin{equation}
	\mathcal{P}_i = \sum_{k} c_{ik}(y) \partial_i^k \quad \left( c_{ik} \in \bm{R}(y) \right). \label{eq:defHF}
	\end{equation}
\end{lemma}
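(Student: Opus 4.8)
The plan is to prove that the left ideal $I \coloneqq \langle \mathcal{P}_1, \dots, \mathcal{P}_m \rangle$ generated by the given operators is zero-dimensional in the sense of Definition~\ref{def:zero}. Once this is established, holonomicity of $f$ is immediate: for any $\mathcal{Q} = \sum_i \mathcal{Q}_i \mathcal{P}_i \in I$ we have $\mathcal{Q} \bullet f = \sum_i \mathcal{Q}_i \bullet (\mathcal{P}_i \bullet f) = 0$, so $f$ is a solution of the zero-dimensional ideal $I$, and Definition~\ref{def:HF} applies directly.

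First I would normalize the generators. Let $d_i$ be the order of $\mathcal{P}_i$ as an ordinary differential operator in $\partial_i$, i.e. the largest $k$ with $c_{ik} \neq 0$. Since the coefficients lie in the field $\bm{R}(y)$, the leading coefficient $c_{i d_i}$ is invertible, and each generator may be replaced by $c_{i d_i}^{-1}\mathcal{P}_i$ without changing $I$; hence I may assume without loss of generality that every generator is monic in $\partial_i$,
\[
\mathcal{P}_i = \partial_i^{d_i} + \sum_{k=0}^{d_i-1} \tilde{c}_{ik}(y)\,\partial_i^k, \qquad \tilde{c}_{ik} \in \bm{R}(y),
\]
which yields in the quotient the rewriting rule $\partial_i^{d_i} \equiv -\sum_{k<d_i} \tilde{c}_{ik}(y)\,\partial_i^k \pmod{I}$.

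The heart of the argument is to show that the finite set of standard monomials $S \coloneqq \{\partial^\alpha \mid 0 \le \alpha_i < d_i,\ i = 1,\dots,m\}$ spans $\bm{R}(y)\langle\partial\rangle / I$ as an $\bm{R}(y)$-vector space. Every operator has the normal form $\sum_\alpha a_\alpha(y)\partial^\alpha$, so it suffices to reduce each monomial $\partial^\alpha$ to an $\bm{R}(y)$-combination of elements of $S$. If $\alpha_i \ge d_i$ for some $i$, I would write $\partial^\alpha = \partial^{\alpha - d_i e_i}\partial_i^{d_i}$ (the $\partial_j$ commute, $e_i$ being the $i$-th unit multi-index) and apply the rewriting rule; since $\partial^{\alpha - d_i e_i}\mathcal{P}_i \in I$, this replaces $\partial^\alpha$ modulo $I$ by $-\partial^{\alpha - d_i e_i}\sum_{k<d_i}\tilde{c}_{ik}(y)\partial_i^k$. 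Expanding each $\partial^{\alpha - d_i e_i}\tilde{c}_{ik}(y)$ through the Leibniz rule and returning to normal form produces a sum of terms $b_\beta(y)\partial^\beta$ whose $i$-th exponent obeys $\beta_i \le (\alpha_i - d_i) + k < \alpha_i$ and whose remaining exponents satisfy $\beta_j \le \alpha_j$ for $j \neq i$; in particular the total degree $|\beta| = \beta_1 + \cdots + \beta_m$ satisfies $|\beta| < |\alpha|$. Thus repeated reduction strictly decreases this nonnegative-integer measure, terminates after finitely many steps, and leaves only monomials in $S$, giving $\dim_{\bm{R}(y)}\bm{R}(y)\langle\partial\rangle / I \le \prod_{i=1}^m d_i < \infty$. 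Hence $I$ is zero-dimensional.

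The step I expect to be the main obstacle is making the termination rigorous inside the noncommutative ring $\bm{R}(y)\langle\partial\rangle$: one must verify, through the Leibniz expansion of $\partial^{\alpha - d_i e_i}\tilde{c}_{ik}(y)$, that commuting the differential factor past the rational coefficients never lifts any exponent back above its threshold, so that a genuine well-founded measure --- here the total degree $|\alpha|$ --- strictly decreases at every reduction. The underlying computation is routine Leibniz bookkeeping, but it must be organized around such a measure to guarantee that the spanning set $S$ is actually reached in finitely many steps rather than merely conjectured, which is what secures the finite-dimensionality of the quotient.
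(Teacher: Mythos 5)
Your proposal is correct, and it does more work than the paper does: the paper's entire proof is a one-line citation, asserting that $\langle \mathcal{P}_1,\dots,\mathcal{P}_m\rangle$ is zero-dimensional by Theorem~6.1.11 of the Hibi reference and then invoking the definition of holonomic function, exactly as in your opening paragraph. Where you differ is that you actually prove the cited zero-dimensionality from scratch: normalizing each generator to be monic in $\partial_i$, using $\partial_i^{d_i} \equiv -\sum_{k<d_i}\tilde{c}_{ik}\partial_i^k \pmod{I}$ as a rewriting rule, and running a Leibniz-expansion reduction whose termination is controlled by the total degree $|\alpha|$. Your key bookkeeping step is sound: in the expansion of $\partial^{\alpha-d_ie_i}\tilde{c}_{ik}(y)\partial_i^k$, commuting derivatives past coefficients only ever \emph{lowers} exponents (the Leibniz rule produces monomials $\partial^{\gamma+ke_i}$ with $\gamma \le \alpha - d_ie_i$), so $\beta_i \le \alpha_i - d_i + k < \alpha_i$ and $\beta_j \le \alpha_j$ for $j\neq i$, giving the strict decrease in $|\beta|$ and hence termination by induction on total degree. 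This yields the explicit spanning set $S=\{\partial^\alpha \mid 0\le \alpha_i < d_i\}$ and the dimension bound $\prod_i d_i$, which is precisely the content of the theorem the paper cites (your $S$ is the set of standard monomials, since the monic $\mathcal{P}_i$ form a Gr\"{o}bner basis of $I$ in this situation). The trade-off: the paper's citation is shorter and rests on established Gr\"{o}bner-basis theory, while your argument is self-contained, constructive, and makes the finite dimension explicit. One small implicit hypothesis worth noting in both treatments: each $\mathcal{P}_i$ must be nonzero (your normalization needs a nonzero leading coefficient), since with a zero operator the conclusion fails.
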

\begin{proof}
	The proof is readily completed because the ideal $\langle \mathcal{P}_1,\dots,\mathcal{P}_m\rangle$ is zero-dimensional (see Theorem~6.1.11 in~\cite{Hibi2013}). 
\end{proof}

The class of holonomic functions is closed under addition, multiplication, and differentiation. 
\begin{lemma}[\cite{Kauers2013}] \label{lem:closure}
	Suppose $f(y)$ and $g(y)$ are holonomic functions. 
	Then, $f(y) + g(y)$, $f(y)g(y)$, and $\partial_i f(y)\ (i=1,\dots,m)$ are also holonomic. 
\end{lemma}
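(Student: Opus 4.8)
The plan is to recast holonomicity as a single finiteness condition on the space of derivatives, and then to verify that this condition is preserved under each of the three operations. For a function $\phi(y)$, let $V_\phi \coloneqq \mathrm{span}_{\bm{R}(y)}\{\partial^\alpha \bullet \phi \mid \alpha \in \bm{Z}_{\ge 0}^m\}$ denote the $\bm{R}(y)$-linear space generated by $\phi$ together with all of its higher-order derivatives. The key reduction I would establish first is the equivalence: \emph{$\phi$ is holonomic if and only if $\dim_{\bm{R}(y)} V_\phi < \infty$.} For the forward implication, if $\phi$ is a solution of a zero-dimensional ideal $I$, then every coset $\partial^\alpha + I$ lies in the finite-dimensional quotient $\bm{R}(y)\langle\partial\rangle/I$; choosing a finite $\bm{R}(y)$-basis of cosets and applying the corresponding operators to $\phi$ expresses each $\partial^\alpha \bullet \phi$ as an $\bm{R}(y)$-combination of finitely many derivatives, so $V_\phi$ is finite-dimensional. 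For the converse, finiteness of $V_\phi$ forces, for each index $i$, the sequence $\phi, \partial_i \bullet \phi, \partial_i^2 \bullet \phi, \dots$ to be $\bm{R}(y)$-linearly dependent, yielding an annihilating operator of the form $\mathcal{P}_i = \sum_k c_{ik}(y)\partial_i^k$; by the preceding lemma these operators generate a zero-dimensional ideal, so $\phi$ is holonomic.

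With this characterization in hand, each closure property becomes a finite-dimensionality bound. For addition, linearity of the action gives $\partial^\alpha \bullet (f+g) = \partial^\alpha \bullet f + \partial^\alpha \bullet g$, so that $V_{f+g} \subseteq V_f + V_g$ and hence $\dim_{\bm{R}(y)} V_{f+g} \le \dim_{\bm{R}(y)} V_f + \dim_{\bm{R}(y)} V_g < \infty$. For differentiation, writing $e_i$ for the $i$-th unit multi-index, we have $V_{\partial_i f} = \mathrm{span}_{\bm{R}(y)}\{\partial^{\alpha + e_i} \bullet f\} \subseteq V_f$, so $\dim_{\bm{R}(y)} V_{\partial_i f} \le \dim_{\bm{R}(y)} V_f < \infty$; this case is essentially immediate.

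The multiplication case is where the only genuine work lies, and it is handled by the Leibniz rule. I would expand $\partial^\alpha \bullet (fg) = \sum_{\beta \le \alpha} \binom{\alpha}{\beta} (\partial^\beta \bullet f)(\partial^{\alpha - \beta} \bullet g)$, which exhibits every derivative of $fg$ as an $\bm{R}(y)$-combination of products of a derivative of $f$ with a derivative of $g$. Fixing $\bm{R}(y)$-bases $\{\phi_1, \dots, \phi_r\}$ of $V_f$ and $\{\psi_1, \dots, \psi_s\}$ of $V_g$, each such product lies in the span of the $rs$ functions $\phi_i \psi_j$, whence $V_{fg} \subseteq \mathrm{span}_{\bm{R}(y)}\{\phi_i \psi_j \mid 1 \le i \le r,\ 1 \le j \le s\}$ and therefore $\dim_{\bm{R}(y)} V_{fg} \le rs < \infty$.

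The main obstacle I anticipate is not any single computation but the careful proof of the forward direction of the characterization: one must argue that a zero-dimensional ideal admits a finite set of standard-monomial derivatives to which every $\partial^\alpha \bullet \phi$ reduces over $\bm{R}(y)$, which is precisely where the finite-dimensionality of $\bm{R}(y)\langle\partial\rangle/I$ is genuinely used. Once that normal-form argument is in place, the three closure statements follow uniformly as dimension counts, with only the Leibniz bookkeeping in the product case requiring real attention.
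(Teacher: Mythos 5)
The paper does not actually prove this lemma --- it is stated as a known result with a citation to \cite{Kauers2013} and no argument is given --- so there is no internal proof to compare against; what you have written is, in substance, the standard proof from that cited literature, and it is correct. Your pivot, namely that $\phi$ is holonomic if and only if the $\bm{R}(y)$-span $V_\phi$ of all derivatives $\partial^\alpha \bullet \phi$ is finite-dimensional (D-finiteness), is the right one: the forward direction holds because $P \mapsto P \bullet \phi$ is a left-$\bm{R}(y)$-linear map that kills the zero-dimensional ideal $I$, so $V_\phi$ is an image of the finite-dimensional quotient $\bm{R}(y)\langle \partial \rangle / I$; the converse correctly extracts, from the forced linear dependence of $\phi, \partial_i \bullet \phi, \partial_i^2 \bullet \phi, \dots$, nonzero operators of the form~\eqref{eq:defHF} in each variable, which generate a zero-dimensional ideal by the paper's first lemma (Theorem~6.1.11 of \cite{Hibi2013}). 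Given that equivalence, your three dimension counts (the sum bound for $f+g$, the inclusion $V_{\partial_i f} \subseteq V_f$, and the Leibniz bound $\dim V_{fg} \le rs$) are all sound. Two minor points deserve care in a full write-up: the basis cosets of $\bm{R}(y)\langle \partial \rangle / I$ need not be represented by monomials, so either invoke standard monomials from Gr\"{o}bner-basis theory or note that the images $Q_j \bullet \phi$ of arbitrary coset representatives still span a finite-dimensional space containing $V_\phi$; and all $\bm{R}(y)$-linear algebra on functions implicitly requires restricting to a domain avoiding the poles of the rational coefficients, the same caveat the paper suppresses throughout. Finally, it is worth observing that your spanning argument is precisely the mechanism the paper itself uses later in Lemma~\ref{lem:matFiniteness}, where finite-dimensionality of $\mathrm{Span}_{\bm{R}(z)}\{\mathcal{D}^\alpha \Omega\}$ inside $\mathrm{Skew}_d(\bm{R}(z))$ plays the role of your $V_\phi$; so your proof makes self-contained a result the paper delegates to a reference, using a technique consistent with the rest of the paper.
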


The zero-dimensional ideals that annihilate the sums and products of holonomic functions can be computed through the symbolic computation of differential operator; however, it requires the notion of holonomic ideals in the Weyl algebra (see~\cite{Oaku2003} for details).

A specific system of PDEs, called the Pfaffian system, can be derived from the finite dimensional quotient space $\bm{R}(y)\langle \partial \rangle / I$. 
\begin{definition}[\cite{Saito2000}] \label{def:Pfaff}
	The Pfaffian system defined by $A_1,\dots,A_m \allowbreak \in \bm{R}(y)^{s \times s}$ is a system of PDEs for an $s$-dimensional vector-valued function $q(y)$: 
		\begin{equation}
			\partial_i q(y) = A_i(y)q(y), \quad (i=1,\dots,m). \label{eq:Pfaff}
		\end{equation}
\end{definition}

The relationship between holonomic functions, zero-dimensional ideals, and Pfaffian systems can be summarized as follows. 
\begin{lemma}[\cite{Saito2000}] \label{lem:HFandPfaff}
	Let $I \subset \bm{R}(y)\langle \partial \rangle$ be a zero-dimensional ideal such that the quotient space $\bm{R}(y)\langle \partial \rangle / I$ is $s$-dimensional. 
	Let $\mathcal{B}$ be a vector of differential operators defined as 
	\begin{equation}
		\mathcal{B} \coloneqq [1\ \partial^{\alpha_1}\ \cdots\ \partial^{\alpha_{s-1}}]^\T, \label{eq:ODvec}
	\end{equation}
	where $1, \partial^{\alpha_1}, \dots, \partial_{s-1}$ are the standard monomials~\cite{Saito2000} of $I$. 
	Then, $A_1(y),\dots,A_m(y) \in \bm{R}(y)^{s \times s}$ in~\eqref{eq:Pfaff} can be computed such that a vector-valued function $Q(y) = \mathcal{B} \bullet f(y)$ satisfies the Pfaffian system, where $f$ denotes any solution of $I$. 
	Moreover, $A_1(y),\dots,A_m(y)$ satisfy the integrability condition: 
	\begin{equation}
		\partial_j A_i + A_i A_j = \partial_i A_j + A_j A_i \quad (1 \leq i < j \leq m)\label{eq:intCond}
	\end{equation}
\end{lemma}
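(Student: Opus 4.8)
The plan is to use the defining property of the standard monomials: by the Gröbner basis theory of $\bm{R}(y)\langle \partial \rangle$, the residue classes of $1, \partial^{\alpha_1}, \dots, \partial^{\alpha_{s-1}}$ form an $\bm{R}(y)$-basis of the $s$-dimensional quotient $V \coloneqq \bm{R}(y)\langle \partial \rangle / I$. Writing $b_0 = 1$ and $b_j = \partial^{\alpha_j}$ so that $\mathcal{B} = [b_0\ \cdots\ b_{s-1}]^\T$, the first step is, for each $i$ and each $j$, to compute the normal form of the operator $\partial_i b_j$ modulo $I$ with respect to a Gröbner basis of $I$. Because the classes $\bar{b}_0,\dots,\bar{b}_{s-1}$ span $V$, this yields uniquely determined coefficients $(A_i)_{jk} \in \bm{R}(y)$ with $\partial_i b_j \equiv \sum_{k=0}^{s-1} (A_i)_{jk}\, b_k \pmod{I}$; collecting them into $A_i \in \bm{R}(y)^{s\times s}$ is exactly what makes the matrices computable. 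Since $f$ is a solution of $I$, every operator in $I$ annihilates $f$, so applying the congruence to $f$ and using $Q_k = b_k \bullet f$ gives $\partial_i Q_j = (\partial_i b_j)\bullet f = \sum_k (A_i)_{jk} Q_k$, that is, $\partial_i Q = A_i Q$. This establishes \eqref{eq:Pfaff}.

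For the integrability condition \eqref{eq:intCond}, the idea is to read the $A_i$ as describing the left action of $\partial_i$ on $V$ and to transport the ring relation $\partial_i \partial_j = \partial_j \partial_i$ to this action. The one point requiring care is that the action is \emph{not} $\bm{R}(y)$-linear but obeys a Leibniz rule: from $\partial_i c = c\,\partial_i + (\partial_i \bullet c)$ for $c \in \bm{R}(y)$, one obtains $\partial_i \cdot (c\,\bar{b}_k) = (\partial_i \bullet c)\,\bar{b}_k + c\,(\partial_i \cdot \bar{b}_k)$ in $V$. Composing the actions of $\partial_j$ and then $\partial_i$ on $\bar{b}_l$ and applying this rule, the coefficient of $\bar{b}_r$ works out to $(\partial_i A_j + A_j A_i)_{lr}$; interchanging $i$ and $j$ gives $(\partial_j A_i + A_i A_j)_{lr}$. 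Equating the two expressions, which is legitimate precisely because $\partial_i \partial_j = \partial_j \partial_i$, produces the matrix identity $\partial_j A_i + A_i A_j = \partial_i A_j + A_j A_i$, which is \eqref{eq:intCond}.

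I expect the integrability step to be the main obstacle, not because any single computation is hard but because the bookkeeping of the noncommutative action must be done carefully: the scalar coefficients sit to the left of the monomials, so in the composition $\partial_i\partial_j$ the $A$-factors appear in reversed order as $A_j A_i$, and keeping track of this ordering together with the Leibniz derivative terms is what reproduces \eqref{eq:intCond} exactly. As a cross-check, and as an alternative route that avoids the module computation, one may instead differentiate the relation $\partial_i Q = A_i Q$ directly to get $\partial_j\partial_i Q = (\partial_j A_i + A_i A_j)Q$ and $\partial_i\partial_j Q = (\partial_i A_j + A_j A_i)Q$; since a zero-dimensional holonomic ideal admits $s$ solutions whose value vectors are linearly independent at a generic point, the vanishing of $(\partial_j A_i + A_i A_j - \partial_i A_j - A_j A_i)Q$ for all such solutions forces the matrix itself to vanish, recovering \eqref{eq:intCond}.
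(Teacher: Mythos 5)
Your proof is correct. The paper itself offers no proof of this lemma---it is imported from \cite{Saito2000}---and your primary argument is essentially the standard one from that source: the entries of $A_i$ are obtained by normal-form reduction of $\partial_i\partial^{\alpha_j}$ modulo a Gr\"{o}bner basis of $I$, expressed in the standard-monomial basis of $\bm{R}(y)\langle \partial \rangle / I$; the Pfaffian system then follows by applying these congruences to any solution $f$ of $I$; and integrability follows by transporting $\partial_i\partial_j = \partial_j\partial_i$ through the semilinear (Leibniz) left action of $\partial_i$ on the quotient, where the order reversal producing $A_jA_i$ is exactly the bookkeeping point you identify. One caveat: your ``alternative route'' to \eqref{eq:intCond} is circular in the context of this paper, because the existence of $s$ solutions whose value vectors are linearly independent at a point is precisely Lemma~\ref{lem:finiteness}, which is itself established (in the cited source) by Frobenius-type integration of the Pfaffian system and therefore presupposes the integrability condition; making that route rigorous would require invoking the Cauchy--Kovalevskaya--Kashiwara theorem independently. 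Your module-theoretic argument, however, is self-contained and suffices on its own.
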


The key property of the Pfaffian system is that its solution space is finite-dimensional over $\bm{R}$ as the space of analytic functions. 
\begin{lemma}[\cite{Saito2000}] \label{lem:finiteness}
	With the same notations as Lemma~\ref{lem:HFandPfaff}, let $U$ denote any simply connected domain in $\{y \in \bm{R}^m \mid D(y) \neq 0\}$, where $D(y)$ is the least common multiple of all denominators included in $A_1, \dots, A_m$. 
	Let $\mathbb{S}(I)$ be the set of all analytic solutions of $I$ defined on $U$. 
	Let $\mathbb{VS}(A_1, \dots, A_m)$ be the set of all analytic solutions of the Pfaffian system~\eqref{eq:Pfaff} defined on $U$. 
	Then, $\mathbb{S}(I)$, $\mathbb{VS}(A_1,\dots,A_m)$, and $\bm{R}^s$ are isomorphic to each other with isomorphisms $\phi_{\mathcal{B}}\colon \mathbb{S}(I) \to \mathbb{VS}(A_1,\dots,A_m)$ and $\psi_{\bar{y}}\colon \mathbb{VS}(A_1,\dots,A_m) \to \bm{R}^s$ defined by
	\begin{equation*}
		\phi_{\mathcal{B}}(y) \coloneqq \mathcal{B} \bullet f, \quad \psi_{\bar{y}}(q) \coloneqq q(\bar{y}),
	\end{equation*}
	with any fixed $\bar{y} \in U$. 
	The inverse $\phi_{\mathcal{B}}^{-1}(q)$ can be defined as the first component of $q$; $\psi_{\bar{y}}^{-1}(\bar{q})$ can be defined as the unique solution of~\eqref{eq:Pfaff} with boundary condition $q(\bar{y}) = \bar{q}$. 
\end{lemma}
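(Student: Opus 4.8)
The plan is to prove that the two asserted maps are $\bm{R}$-linear bijections; since the composite $\psi_{\bar{y}} \circ \phi_{\mathcal{B}}$ then realizes $\mathbb{S}(I) \cong \bm{R}^s$, all three spaces become isomorphic. Both $\phi_{\mathcal{B}}$ and $\psi_{\bar{y}}$ are $\bm{R}$-linear because the operators collected in $\mathcal{B}$ act linearly on functions and evaluation at $\bar{y}$ is linear, so the whole argument reduces to establishing bijectivity of each map separately.

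First I would treat $\psi_{\bar{y}}$, the more standard half. The integrability condition~\eqref{eq:intCond} is exactly the Frobenius condition for the overdetermined system~\eqref{eq:Pfaff}. Invoking the Frobenius (generalized Cauchy--Kovalevskaya) theorem on the simply connected domain $U$ — on which $D(y) \neq 0$, so that every $A_i$ is analytic — yields that for each prescribed $\bar{q} \in \bm{R}^s$ there is a unique analytic solution $q$ of~\eqref{eq:Pfaff} on $U$ with $q(\bar{y}) = \bar{q}$. This simultaneously gives injectivity (uniqueness) and surjectivity (existence) of $\psi_{\bar{y}}$, identifies $\psi_{\bar{y}}^{-1}(\bar{q})$ as this unique solution, and shows $\dim_{\bm{R}} \mathbb{VS}(A_1,\dots,A_m) = s$.

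Next I would analyze $\phi_{\mathcal{B}}(f) = \mathcal{B} \bullet f$. It is well-defined into $\mathbb{VS}(A_1,\dots,A_m)$ by Lemma~\ref{lem:HFandPfaff}, which constructs the $A_i$ precisely so that $\mathcal{B} \bullet f$ solves~\eqref{eq:Pfaff} for any $f \in \mathbb{S}(I)$. Since the first entry of $\mathcal{B}$ in~\eqref{eq:ODvec} is $1$, the first component of $\mathcal{B} \bullet f$ equals $f$; hence $\mathcal{B} \bullet f = 0$ forces $f = 0$, giving injectivity and identifying $\phi_{\mathcal{B}}^{-1}$ as projection onto the first component. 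For surjectivity I would take an arbitrary $q \in \mathbb{VS}(A_1,\dots,A_m)$, set $f \coloneqq q_1$, and show that $q = \mathcal{B} \bullet f$. The structural fact I would exploit is that the standard monomials $1, \partial^{\alpha_1}, \dots, \partial^{\alpha_{s-1}}$ form an order ideal under divisibility, so every $\partial^{\alpha_j}$ of positive order factors as $\partial_i \partial^{\alpha_{j'}}$ with $\partial^{\alpha_{j'}}$ again standard. Because the product $\partial_i \partial^{\alpha_{j'}} = \partial^{\alpha_j}$ is then its own normal form, the $j'$-th row of $A_i$ is the unit vector with $1$ in position $j$, so the corresponding component of~\eqref{eq:Pfaff} reads $\partial_i q_{j'} = q_j$. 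An induction on the term order, starting from $q_1 = f$, then propagates $q_j = \partial^{\alpha_j} \bullet f$ to all components, establishing $q = \mathcal{B} \bullet f$.

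The step I expect to be the main obstacle is verifying that this $f = q_1$ is annihilated by the whole ideal $I$, and not merely by the Pfaffian relations among standard monomials. For this I would prove, by induction on the term order, that \emph{every} monomial acts on $f$ through its normal form modulo $I$, i.e. $\partial^\beta \bullet f = \mathrm{NF}(\partial^\beta) \bullet f$; the inductive step factors $\partial^\beta = \partial_i \partial^{\beta'}$, applies the Leibniz rule, and uses that $A_i$ encodes the reductions $\partial_i \partial^{\alpha_j} \equiv \sum_k (A_i)_{kj} \partial^{\alpha_k} \pmod{I}$. The consistency that makes these repeated reductions path-independent is precisely the integrability condition~\eqref{eq:intCond}, so the same hypothesis that powered the Frobenius argument reappears here. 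Once each element of a Gr\"obner basis of $I$ — whose normal form vanishes — is seen to annihilate $f$, it follows that $f \in \mathbb{S}(I)$, that $\phi_{\mathcal{B}}$ is surjective, and hence that all three spaces are isomorphic as claimed.
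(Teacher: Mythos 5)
The paper does not prove this lemma: it is quoted directly from \cite{Saito2000}, so there is no in-paper proof to compare against, and your reconstruction has to be judged against the standard argument in that reference --- which is essentially what you give. Both halves are correct: the Frobenius theorem (using the integrability condition~\eqref{eq:intCond} and analyticity of the $A_i$ on the simply connected set $U$) makes $\psi_{\bar{y}}$ bijective, while the Gr\"{o}bner-basis structure of the $A_i$ --- the row of $A_i$ indexed by $\partial^{\alpha_k}$ lists the coefficients, with respect to the standard monomials, of the normal form of $\partial_i\partial^{\alpha_k}$ modulo $I$ --- makes $\phi_{\mathcal{B}}$ bijective; and you correctly isolate the one genuinely nontrivial step, namely that $f \coloneqq q_1$ is annihilated by all of $I$ and not merely consistent with the relations among standard monomials. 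One attribution in that last step is backwards, though harmlessly so: the path-independence of your reductions $\partial^{\beta}\bullet f = \mathrm{NF}(\partial^{\beta})\bullet f$ is guaranteed by the uniqueness of normal forms modulo a Gr\"{o}bner basis, not by~\eqref{eq:intCond}; the integrability condition is itself a consequence of that uniqueness (equivalently, of the fact that $\partial_i\partial_j$ and $\partial_j\partial_i$ act identically on $\bm{R}(y)\langle\partial\rangle/I$), and it is needed as a hypothesis only in the Frobenius half. Since each step of your induction may fix a single factorization $\partial^{\beta}=\partial_i\partial^{\beta'}$ and the target $\mathrm{NF}(\partial^{\beta})$ is independent of that choice, the argument goes through exactly as written.
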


When $I = \langle \mathcal{P}_1,\dots,\mathcal{P}_m \rangle$, $\mathcal{B}$ and $A_1,\dots,A_m$ can be computed from $\mathcal{P}_1,\dots,\mathcal{P}_m$ using the Gr\"{o}bner bases in $\bm{R}(y)\langle \partial \rangle$~\cite{Saito2000}. 
In this paper, the solution space $\mathbb{S}(I)$ is denoted by $\mathbb{S}(\mathcal{P}_1,\dots,\mathcal{P}_m)$ if $I = \langle \mathcal{P}_1,\dots,\mathcal{P}_m \rangle$. 

\section{First Integrals in Solution Space of Pfaffian System}
For differential operators $\mathcal{P}_1, \dots, \mathcal{P}_{2n}$ of the form of ~\eqref{eq:defHF} that annihilate $h(x, p)$, suppose that other first integrals $f_2,\dots,f_n$ that define a solution of the HJE are included in the solution space $\mathbb{S}(\mathcal{P}_1,\dots,\mathcal{P}_{2n})$. 
Now, a condition for the existence of such first integrals is derived by considering the Pfaffian system associated with $\mathcal{P}_1, \dots, \mathcal{P}_{2n}$. 
This condition is obtained as a set of algebraic equations for a finite number of finite-dimensional vectors, which can be easily solved. 
Consequently, the first integrals can be determined by solving the algebraic equations if they have a sufficient number of solutions. 

The condition can be obtained in two steps. 
First, the set of PDEs~\eqref{eq:poisson} is converted into an infinite number of algebraic equations for some finite-dimensional vectors. 
It should be noted that this first part is presented in the preliminary form in the conference proceedings~\cite{Iori2021}. 
Subsequently, the number of equations is reduced to a finite number.  
In this section, $\mathcal{B}$ and $A_1,\dots,A_{2n} \in \bm{R}(z)^{d \times d}$ denote the vector of differential operators and matrices of rational functions derived from $\mathcal{P}_1,\dots,\mathcal{P}_{2n}$ through Lemma~\ref{lem:HFandPfaff}. 
Moreover, $\mathbb{S}$ and $\mathbb{VS}$ denote $\mathbb{S}(\mathcal{P}_1,\dots,\mathcal{P}_{2n})$ and $\mathbb{VS}(A_1, \dots, A_{2n})$, respectively. 
In addition, a point $\bar{z} \in U$ is fixed, where the domain $U$ is defined as in Lemma~\ref{lem:HFandPfaff}. 

\subsection{Reduction from PDEs to Infinite Number of Algebraic Equations}
For the first step, suppose that all first integrals $f_1, \dots, f_n$ lie in $\mathbb{S}$. 
From Lemma~\ref{lem:HFandPfaff}, there exists $q_k \in \mathbb{VS}$ such that $q_k = \phi_{\mathcal{B}}(f_k)$ for each $f_k$. 

From~\eqref{eq:ODvec}, we obtain
\begin{equation}
	\begin{cases}
		\nabla_x f_k(z) = B_x(z)q_k(z), \\
		\nabla_p f_k(z) = B_p(z)q_k(z),
	\end{cases} \label{eq:derivs}
\end{equation}
where $B_x \in \bm{R}(z)^{n \times d}$ and $B_p \in \bm{R}(z)^{n \times d}$ consist of the first rows of $A_1,\dots,A_n$ and those of $A_{n+1},\dots,A_{2n}$, respectively. 
By substituting~\eqref{eq:derivs} into~\eqref{eq:poisson} and~\eqref{eq:projective}, we obtain
\begin{align}
	&\begin{aligned}
	&\{f_k, f_l\}(z) = (q_k^\T \Omega q_l)(z) = 0, \\
	&\qquad \qquad \qquad (z \in U,\ 1 \leq k < l \leq n)\label{eq:Poisson2Quad} 
	\end{aligned} \\
	&\det \left\{B_p(\bar{z})\left[q_1(\bar{z})\ \cdots\ q_n(\bar{z})\right]\right\} \neq 0 \label{eq:projCond}
\end{align}
for a point $\bar{z} \in \mathbb{U}$ and $\Omega \coloneqq B_p^\T B_x - B_x^\T B_p \in \mathrm{Skew}_d(\bm{R}(z))$. 

From Lemma~\ref{lem:finiteness}, $q_k^\T \Omega q_l$ is analytic at any $\bar{z} \in U$; hence, $q_k^\T \Omega q_l = 0$ over $U$ if and only if 
\begin{equation}
	\left. \partial^\alpha \bullet \left( q_k^\T \Omega q_l \right)(z)\right|_{z = \bar{z}} = 0 \label{eq:allderivzero}
\end{equation}
holds for any $\alpha \in \bm{Z}_{\geq 0}^{2n}$. 
For each differential operator $\partial_i\ (i=1,\dots,2n)$, the Pfaffian system yields
\begin{align*}
	\partial_i \bullet \left( q_k^\T \Omega q_l \right) &= (\partial_i q_k)^\T \Omega q_l + q_k^\T (\partial_i \Omega) q_l + q_k^\T \Omega (\partial_i q_l) \\
	&= q_k^\T \left( A_i^\T \Omega + \partial_i \Omega + \Omega A_i \right) q_l. 
\end{align*}
By defining a mapping $\mathcal{D}_i\colon \mathrm{Skew}_d(\bm{R}(z)) \to \mathrm{Skew}_d(\bm{R}(z))$ for each $i = 1,\dots,2n$ as
\begin{equation}
	\mathcal{D}_i \Omega \coloneqq A_i^\T \Omega + \Omega A_i + \partial_i \Omega, \label{eq:defD}
\end{equation}
we obtain
\begin{equation}
	\partial^\alpha \bullet (q_k^\T \Omega q_l) = q_k^\T (\mathcal{D}^\alpha\Omega) q_l. \label{eq:pd2D}
\end{equation}

By substituting~\eqref{eq:pd2D} into~\eqref{eq:allderivzero}, an infinite set of algebraic equations for vectors $\bar{q}_k$ and $\bar{q}_l$ is obtained: 
\begin{equation}
	\bar{q}_k^\T \left\{\left( \mathcal{D}^\alpha \Omega \right)(\bar{z})\right\} \bar{q}_l = 0 \quad (\alpha \in \bm{Z}_{\geq 0}^{2n}), \label{eq:infCond}
\end{equation}
where $\mathcal{D}^\alpha$ is the composition of mappings $\mathcal{D}_1^{\alpha_1} \circ \mathcal{D}_2^{\alpha_2} \circ \cdots \circ \mathcal{D}_{2n}^{\alpha_{2n}}$. 
Thus far, the discussion can be summarized as follows.
\begin{theorem} \label{thm:infCond}
	Suppose the differential operators $\mathcal{P}_1, \dots, \allowbreak\mathcal{P}_{2n}$ are of the form of~\eqref{eq:defHF} and annihilate the holonomic Hamiltonian $h$. 
	For a fixed $\bar{z} \in U$ and $\bar{q}_1 = \psi_{\bar{z}} \circ \phi_{\mathcal{B}}(h)$, let $\bar{q}_2, \dots, \bar{q}_n \in \bm{R}^d$ be the constant vectors that satisfy~\eqref{eq:infCond} for $k, l \in \{1,\dots,n\}$. 
	Then, the functions $f_k \coloneqq \phi_{\mathcal{B}}^{-1} \circ \psi_{\bar{z}}^{-1}(\bar{q}_k) \in \mathbb{S}\ (k=2,\dots,n)$ satisfy conditions~\eqref{eq:poisson} and~\eqref{eq:projective}; in other words, they are the first integrals that define a solution of the HJE~\eqref{eq:HJE}. 
\end{theorem}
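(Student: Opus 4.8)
The plan is to establish the converse of the reduction carried out in \eqref{eq:derivs}--\eqref{eq:infCond}: starting from the boundary vectors $\bar q_k$, I reconstruct the candidate first integrals and run that chain of identities backwards, the decisive new ingredient being analytic continuation from the Taylor data at $\bar z$. First I would set $q_k \coloneqq \psi_{\bar z}^{-1}(\bar q_k) \in \mathbb{VS}$ and $f_k \coloneqq \phi_{\mathcal{B}}^{-1}(q_k)$ for $k = 1,\dots,n$; membership $f_k \in \mathbb{S}$ is automatic since $\phi_{\mathcal{B}}$ is the isomorphism of Lemma~\ref{lem:finiteness}, and the normalisation $\bar q_1 = \psi_{\bar z}\circ\phi_{\mathcal{B}}(h)$ forces $f_1 = h$. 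The structural fact I need is that \eqref{eq:derivs} continues to hold for these reconstructed functions, not only for the first integrals assumed at the outset. This follows because $q_k = \phi_{\mathcal{B}}(f_k) = \mathcal{B}\bullet f_k$ and $\mathcal{B}$ begins with the operator $1$, so reading the first scalar row of the Pfaffian identity $\partial_i q_k = A_i q_k$ gives $\partial_i \bullet f_k = (\text{first row of }A_i)\,q_k$; stacking the rows over the $x$- and $p$-blocks then yields $\nabla_x f_k = B_x q_k$ and $\nabla_p f_k = B_p q_k$.

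With \eqref{eq:derivs} in hand I would prove \eqref{eq:poisson}. Fix $1 \le k < l \le n$ and consider $g_{kl}(z) \coloneqq q_k(z)^\T \Omega(z) q_l(z)$, which by \eqref{eq:derivs} equals $\{f_k, f_l\}(z)$ exactly as in \eqref{eq:Poisson2Quad}. Since $q_k, q_l \in \mathbb{VS}$ and the entries of $\Omega$ are rational functions whose denominators divide $D(z)$ and hence do not vanish on $U$, the function $g_{kl}$ is real-analytic on $U$ by Lemma~\ref{lem:finiteness}. The identity \eqref{eq:pd2D} then gives $\partial^\alpha \bullet g_{kl} = q_k^\T (\mathcal{D}^\alpha \Omega) q_l$ for every $\alpha$, and evaluating at $\bar z$ together with the hypothesis \eqref{eq:infCond} shows $\partial^\alpha\bullet g_{kl}\big|_{z=\bar z} = \bar q_k^\T (\mathcal{D}^\alpha\Omega)(\bar z)\,\bar q_l = 0$; that is, every Taylor coefficient of $g_{kl}$ at $\bar z$ vanishes.

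I expect the main obstacle to be the passage from ``all partial derivatives vanish at $\bar z$'' to ``$g_{kl}\equiv 0$ on $U$.'' This is precisely the identity theorem for real-analytic functions: a function analytic on a connected open set that vanishes to infinite order at an interior point vanishes identically, where connectedness is guaranteed because $U$ is taken simply connected in Lemma~\ref{lem:finiteness}. This is the step in which the finite-dimensionality and analyticity of the Pfaffian solution space are essential---without them the infinite family \eqref{eq:infCond} of pointwise conditions at $\bar z$ would not control the global behaviour of $g_{kl}$. Concluding $g_{kl}\equiv 0$ on $U$ gives $\{f_k,f_l\}=0$ for all $1\le k<l\le n$, which is \eqref{eq:poisson}; in particular each $\{f_1,f_k\}=\{h,f_k\}=0$, so the $f_k$ are genuine first integrals of \eqref{eq:Hsys}.

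Finally, for \eqref{eq:projective} I would evaluate the $p$-gradients at $\bar z$: from $\nabla_p f_k = B_p q_k$ and $q_k(\bar z)=\bar q_k$ one gets $[\nabla_p f_1(\bar z)\ \cdots\ \nabla_p f_n(\bar z)] = B_p(\bar z)[\bar q_1\ \cdots\ \bar q_n]$, so the determinant in \eqref{eq:projective} coincides with the quantity in \eqref{eq:projCond}. Hence \eqref{eq:projective} is equivalent to the nondegeneracy requirement \eqref{eq:projCond} on the boundary vectors, which I would therefore state explicitly among the conditions imposed on $\bar q_2,\dots,\bar q_n$. Combining the two parts, the reconstructed $f_2,\dots,f_n$ satisfy \eqref{eq:poisson} and \eqref{eq:projective} and thus define a Lagrangian manifold $\Lambda_v$, i.e.\ a solution of the HJE.
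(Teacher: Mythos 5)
Your proposal is correct and takes essentially the same route as the paper: the paper attaches no separate proof to this theorem---it is presented as a summary of the derivation immediately preceding it (\eqref{eq:derivs}, \eqref{eq:Poisson2Quad}, the analyticity equivalence \eqref{eq:allderivzero}, and \eqref{eq:pd2D})---and your argument is exactly that chain run in the converse direction, with the identity theorem for real-analytic functions, which the paper invokes implicitly via Lemma~\ref{lem:finiteness}, made explicit. Your observation that \eqref{eq:projective} cannot follow from \eqref{eq:infCond} alone and must be imposed as the extra hypothesis \eqref{eq:projCond} on $\bar{q}_1,\dots,\bar{q}_n$ is also well taken; the theorem statement omits this, but the paper's own paragraph immediately following the theorem adds exactly that condition.
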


Now, the condition~\eqref{eq:poisson} is converted into the set of algebraic equations~\eqref{eq:infCond} for $n$ vectors $\bar{q}_1, \dots, \bar{q}_n \in \bm{R}^d$. 
If $\bar{q}_2, \dots, \bar{q}_n$ satisfy~\eqref{eq:infCond} for $k, l \in \{1,\dots,n\}$ and~\eqref{eq:projCond} with $\bar{q}_1$, the first integrals $f_2,\dots,f_n$ that define a solution of the HJE are obtained as $f_k = \phi_{\mathcal{B}}^{-1} \circ \psi_{\bar{z}}^{-1}(\bar{q}_k)\ (k=2,\dots,n)$. 
However, \eqref{eq:infCond} consists of an infinite number of algebraic equations, which cannot be solved or verified for some candidates of $\bar{q}_2, \dots, \bar{q}_n$ in finite time. 
The following section shows that the number of equations can be reduced to be finite.  

\subsection{Reduction to Finite Set of Algebraic Equations}
From the closure property of holonomic functions in Lemma~\ref{lem:closure}, the Poisson bracket $\{f_k, f_l\}$ is a holonomic function of $z$ if $f_k$ and $f_l$ are holonomic. 
Hence, a zero-dimensional ideal annihilating $\{f_k, f_l\}$ and the corresponding Pfaffian system exist. 
Furthermore, it can be shown that there exists a Pfaffian system satisfied by all Poisson brackets $\{f, g\}$ of any two functions $f, g \in \mathbb{S}$. 
\begin{lemma} \label{lem:matFiniteness}
	A finite set of multi-indices $\Gamma = \{\bm{0}, \gamma_1,\dots,\gamma_{t-1}\}$ exists
	such that for any pair of solutions $f, g \in \mathbb{S}$, a Pfaffian system
	\begin{equation}
		\partial_i r(z) = T_i(z) r(z) \quad (i=1,\dots,2n)  \label{eq:PoiPfaff}
	\end{equation}
	is satisfied by the vector-valued function $r = \mathcal{C} \bullet \{f, g\}$ with a vector of differential operators
	\begin{equation}
		\mathcal{C} = [1\ \partial^{\gamma_1}\ \cdots\ \partial^{\gamma_{t-1}}]^\T. \label{eq:PoiB}
	\end{equation}
\end{lemma}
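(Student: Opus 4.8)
The plan is to exploit the bilinearity of the Poisson bracket to collapse the infinitely many pairs $f, g \in \mathbb{S}$ into a single finite-dimensional space of functions, and then apply Lemma~\ref{lem:HFandPfaff} to one common zero-dimensional ideal that annihilates that whole space; the uniform $\mathcal{C}$ and $T_i$ will come out of that single application.

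First I would fix an $\bm{R}$-basis $f^{(1)}, \dots, f^{(d)}$ of $\mathbb{S}$, which is legitimate because Lemma~\ref{lem:finiteness} gives $\mathbb{S} \cong \bm{R}^d$. Writing $f = \sum_a c_a f^{(a)}$ and $g = \sum_b c'_b f^{(b)}$ and using the $\bm{R}$-bilinearity and antisymmetry of $\{\cdot,\cdot\}$, every Poisson bracket of solutions satisfies
\begin{equation*}
\{f, g\} = \sum_{1 \le a < b \le d} (c_a c'_b - c_b c'_a)\, g_{ab}, \qquad g_{ab} \coloneqq \{f^{(a)}, f^{(b)}\}.
\end{equation*}
Hence every such $\{f,g\}$ lies in the finite-dimensional $\bm{R}$-space $W \coloneqq \mathrm{span}_{\bm{R}}\{ g_{ab} : 1 \le a < b \le d \}$, whose dimension does not depend on the chosen pair.

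Next I would pass to a single annihilating ideal. By Lemma~\ref{lem:closure} (the Poisson bracket is a finite sum of products of derivatives), each $g_{ab}$ is holonomic, so its annihilator $\mathrm{Ann}(g_{ab}) \subset \bm{R}(z)\langle\partial\rangle$ contains a zero-dimensional ideal and is therefore itself zero-dimensional. Set $J \coloneqq \mathrm{Ann}(W) = \bigcap_{a<b} \mathrm{Ann}(g_{ab})$. The main step, and the place to be careful, is showing that $J$ is zero-dimensional. I would argue this with the $\bm{R}(z)$-linear map
\begin{equation*}
\bm{R}(z)\langle\partial\rangle / J \longrightarrow \bigoplus_{a<b} \bm{R}(z)\langle\partial\rangle / \mathrm{Ann}(g_{ab}), \qquad [\mathcal{P}]_J \mapsto \bigl( [\mathcal{P}]_{\mathrm{Ann}(g_{ab})} \bigr)_{a<b},
\end{equation*}
whose kernel is exactly $J/J = 0$ by definition of the intersection; since the target is a finite direct sum of finite-dimensional $\bm{R}(z)$-spaces, the source $\bm{R}(z)\langle\partial\rangle/J$ is finite-dimensional, i.e., $J$ is zero-dimensional in the sense of Definition~\ref{def:zero}.

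Finally I would invoke Lemma~\ref{lem:HFandPfaff} on $J$. Letting $t \coloneqq \dim_{\bm{R}(z)} \bm{R}(z)\langle\partial\rangle/J$ and taking the standard monomials $1, \partial^{\gamma_1}, \dots, \partial^{\gamma_{t-1}}$ of $J$ yields $\Gamma = \{\bm{0}, \gamma_1, \dots, \gamma_{t-1}\}$ and the vector $\mathcal{C}$ of~\eqref{eq:PoiB}, together with matrices $T_1, \dots, T_{2n} \in \bm{R}(z)^{t\times t}$ such that $\mathcal{C}\bullet w$ satisfies the Pfaffian system~\eqref{eq:PoiPfaff} for \emph{any} solution $w$ of $J$. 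Since every $\{f,g\}$ with $f, g \in \mathbb{S}$ lies in $W$ and is therefore a solution of $J = \mathrm{Ann}(W)$, the vector $r = \mathcal{C}\bullet\{f,g\}$ satisfies~\eqref{eq:PoiPfaff}, with $\Gamma$ and the $T_i$ fixed independently of the pair. Everything outside the zero-dimensionality of the intersection $J$ is a direct application of the already-established closure and Pfaffian lemmas, so I expect that intersection argument to be the only real obstacle.
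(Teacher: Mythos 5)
Your proof is correct, but it takes a genuinely different route from the paper's. The paper never forms annihilating ideals of the brackets themselves: it works entirely on the matrix side, using the identity $\partial^\alpha \bullet \{f,g\} = q_f^\T (\mathcal{D}^\alpha \Omega) q_g$ from \eqref{eq:pd2D} to observe that all derivatives of all brackets are governed by the matrices $\mathcal{D}^\alpha\Omega$, which live in the $\bm{R}(z)$-vector space $\mathrm{Skew}_d(\bm{R}(z))$ of dimension $d(d-1)/2$; choosing a basis $\{\mathcal{D}^{\gamma_0}\Omega, \dots, \mathcal{D}^{\gamma_{t-1}}\Omega\}$ of the span $\mathbb{M}$ and expanding each $\mathcal{D}_i\mathcal{D}^{\gamma_k}\Omega$ in that basis yields the entries $c_{ikl}$ of $T_i$ explicitly, after which the integrability condition for the $T_i$ must be verified by hand from linear independence. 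You instead work on the function side: bilinearity and antisymmetry place every bracket in the finite-dimensional real span $W$ of the basis brackets $g_{ab}$, Lemma~\ref{lem:closure} makes each $g_{ab}$ holonomic, your injection into the direct sum shows $J = \bigcap_{a<b}\mathrm{Ann}(g_{ab})$ is zero-dimensional, and a single application of Lemma~\ref{lem:HFandPfaff} to $J$ produces $\mathcal{C}$ and the $T_i$ uniformly --- with the integrability condition \eqref{eq:intCond} for free, since that lemma already asserts it. What your route buys is conceptual economy: it reuses the established lemmas instead of redoing a Pfaffian construction, and the intersection-of-annihilators argument is clean and standard. What the paper's route buys is constructiveness and downstream fit: its $\Gamma$ indexes a basis of $\mathbb{M}$, which is exactly what Algorithm~\ref{alg:finite} computes and what the equations \eqref{eq:finCond} of Theorem~\ref{thm:finiteness} quote, whereas your $J$ is built from full annihilators, which are not directly computable objects (your $\Gamma$ still works in Theorem~\ref{thm:finiteness} via \eqref{eq:pd2D}, but that link has to be drawn separately). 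Two small points you should make explicit: the argument that an ideal containing a zero-dimensional ideal is itself zero-dimensional (the surjection of quotients you implicitly invoke), and the degenerate case $W = \{0\}$, where $J$ is the whole ring and $1$ fails to be a standard monomial --- there the lemma holds trivially, but your construction as written breaks down.
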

\begin{proof}
	First, the Pfaffian system~\eqref{eq:PoiPfaff} is derived from the infinite set of matrices $\mathcal{D}^\alpha\Omega\ (\alpha \in \bm{Z}_{\ge 0}^{2n})$. 
	For the $\bm{R}(z)$-vector space $\mathrm{Skew}_d(\bm{R}(z))$, let $\mathbb{M} \subseteq \mathrm{Skew}_d(\bm{R}(z))$ be the subspace spanned by all matrices $\mathcal{D}^\alpha\Omega$, i.e., 
	\begin{equation}
		\mathbb{M} \coloneqq \mathrm{Span}_{\bm{R}(z)}\{\mathcal{D}^\alpha\Omega \mid \alpha \in \bm{Z}_{\ge 0}^{2n}\}. 
	\end{equation}
	The subspace $\mathbb{M}$ must be finite-dimensional because the dimension of $\mathrm{Skew}_d(\bm{R}(z))$ is $( d(d-1)/2 )$. 
	Let $t < ( d(d-1)/2 )$ be the dimension of $\mathbb{M}$ and the set of $t$ matrices $\{\mathcal{D}^{\gamma_0}\Omega, \dots, \mathcal{D}^{\gamma_{t-1}}\Omega\}$ be a basis of $\mathbb{M}$. 
	It should be noted that without loss of generality, we can select $\Gamma = \{\gamma_0, \gamma_1,\dots, \gamma_{t-1}\}$ such that $\gamma_0 = \bm{0} \in \bm{Z}_{\ge 0}^{2n}$ and $\gamma_0 \prec \gamma_1 \prec \cdots \prec \gamma_{t-1} \prec \alpha$ for all $\alpha \in \bm{Z}_{\ge 0}^{2n} \setminus \Gamma$ with a monomial order $\prec$ on $\bm{Z}_{\ge 0}^{2n}$. 

	For any $k \in \{0,\dots,t-1\}$ and $i \in \{1,\dots,2n\}$, there exist rational functions $c_{ikl} \in \bm{R}(z)\ (l=0,\dots,t-1)$ such that the following holds:
	\begin{equation}
		\mathcal{D}_{i}\mathcal{D}^{\gamma_k}\Omega = \sum_{l=0}^{t-1} c_{ikl}\mathcal{D}^{\gamma_l}\Omega. \label{eq:matPfaff}
	\end{equation}
	For every pair of functions $f, g \in \mathbb{S}$,~\eqref{eq:matPfaff} yields
	\begin{align}
		\partial_i\partial^{\gamma_k} \bullet \{f, g\} &= q_f^\T \{\mathcal{D}_i\mathcal{D}^{\gamma_k} \Omega \} q_g \notag \\
		&= \sum_{l=0}^{t-1} c_{ikl} \left[q_f^\T \{\mathcal{D}^{\gamma_l} \Omega\} q_g\right] \notag \\
		&= \sum_{l=0}^{t-1} c_{ikl} \left[ \partial^{\gamma_l} \bullet \{f, g\} \right], \label{eq:mat2poi}
	\end{align}
	where $q_f = \phi_{\mathcal{B}}(f)$ and $q_g = \phi_{\mathcal{B}}(g)$. 
	By combining~\eqref{eq:mat2poi} for $k = 0,\dots,t-1$ into one, we obtain the Pfaffian system~\eqref{eq:PoiPfaff} with matrices $T_i\ (i=1,\dots,2n)$ whose $(k, l)$-component is $c_{ikl}$ and vector $\mathcal{C}$ defined by~\eqref{eq:PoiB}. 

	Next, the proof of matrices $T_i\ (i=1,\dots,2n)$ satisfying the integrability condition~\eqref{eq:intCond} is presented. 
	Let us consider~\eqref{eq:matPfaff}. 
	By applying $\mathcal{D}_j$ with $j \neq i$ to both sides, we obtain
	\begin{align}
		\mathcal{D}_j\mathcal{D}_i\mathcal{D}^{\gamma_k}\Omega =& \mathcal{D}_{j} \left\{ \sum_{l=0}^{t-1} c_{ikl}\mathcal{D}^{\gamma_l}\Omega\right\} \notag\\
		=& \sum_{l=0}^{t-1} \left\{ c_{ikl}\mathcal{D}_j\mathcal{D}^{\gamma_l}\Omega + (\partial_j c_{ikl})\mathcal{D}^{\gamma_l}\Omega \right\} \notag\\
		=& \sum_{l=0}^{t-1} \left\{ c_{ikl}\sum_{s=0}^{t-1} c_{jls}\mathcal{D}^{\gamma_s}\Omega + (\partial_j c_{ikl})\mathcal{D}^{\gamma_l}\Omega \right\} \notag \\
		=& \sum_{l=0}^{t-1} \left\{\sum_{s=0}^{t-1} c_{iks}c_{jsl} + (\partial_j c_{ikl}) \right\}\mathcal{D}^{\gamma_l}\Omega. \label{eq:deriv_ji}
	\end{align}
	By replacing the subscript $i$ with $j$, we have 
	\begin{equation}
		\mathcal{D}_i\mathcal{D}_j\mathcal{D}^{\gamma_k}\Omega = \sum_{l=0}^{t-1} \left\{\sum_{s=0}^{t-1} c_{jks}c_{isl} + (\partial_i c_{jkl}) \right\}\mathcal{D}^{\gamma_l}\Omega. \label{eq:deriv_ij}
	\end{equation}
	The right-hand sides of~\eqref{eq:deriv_ji} and~\eqref{eq:deriv_ij} are equal to each other because $\mathcal{D}_i\mathcal{D}_j\mathcal{D}^{\gamma_k}\Omega = \mathcal{D}_j\mathcal{D}_i\mathcal{D}^{\gamma_k}\Omega$ holds by considering~\eqref{eq:intCond} and~\eqref{eq:defD}. 
	In particular, the coefficients of each matrix are equal to each other because matrices $\mathcal{D}^{\gamma_0}\Omega,\dots,\mathcal{D}^{\gamma_{t-1}}\Omega$ are linearly independent, thereby yielding the integrability condition for $T_1,\dots,T_{2n}$: 
	\begin{equation}
		\partial_jT_i + T_iT_j = \partial_iT_j + T_jT_i \quad (1 \le i < j \le 2n). 
	\end{equation}
	This completes the proof. 
\end{proof}

Thus, from Lemma~\ref{lem:matFiniteness}, the main result is obtained as follows. 
\begin{theorem} \label{thm:finiteness}
	With the same notations as Lemma~\ref{lem:matFiniteness}, let $E(z) \in \bm{R}[z]$ be the least common multiple of the denominators included in $T_1,\dots,T_{2n}$. 
	For any fixed point $\bar{z} \in U' \coloneqq \{z \in U \mid E(z) \neq 0\}$, constant vectors $\bar{q}_k \in \bm{R}^d\ (k=1,\dots,n)$ satisfy~\eqref{eq:infCond}
	if and only if they satisfy a finite number of algebraic equations:
	\begin{equation}
		\bar{q}_k^\T \mathcal{D}^{\gamma} \Omega(\bar{z}) \bar{q}_l = 0 \quad 
		(\gamma \in \Gamma,\ 1 \leq k < l \leq n). \label{eq:finCond}
	\end{equation}
\end{theorem}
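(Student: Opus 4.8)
The plan is to read~\eqref{eq:finCond} as a boundary condition for the Poisson-bracket Pfaffian system of Lemma~\ref{lem:matFiniteness} and then invoke the uniqueness of its solutions from Lemma~\ref{lem:finiteness}. The forward implication is immediate: since $\Gamma \subseteq \bm{Z}_{\ge 0}^{2n}$, any vectors satisfying~\eqref{eq:infCond} for all $\alpha$ satisfy~\eqref{eq:finCond} for all $\gamma \in \Gamma$ as a special case. All the content lies in the converse, where finitely many scalar equations holding at the single point $\bar{z}$ must be shown to force the entire infinite family at $\bar{z}$.

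For the converse, I would fix a pair $1 \le k < l \le n$ and use the isomorphisms of Lemma~\ref{lem:finiteness} to attach to the arbitrary vectors $\bar{q}_k, \bar{q}_l$ the solutions $q_k = \psi_{\bar{z}}^{-1}(\bar{q}_k)$, $q_l = \psi_{\bar{z}}^{-1}(\bar{q}_l) \in \mathbb{VS}$ and the holonomic functions $f_k = \phi_{\mathcal{B}}^{-1}(q_k)$, $f_l = \phi_{\mathcal{B}}^{-1}(q_l) \in \mathbb{S}$. I would then form $r(z) = \mathcal{C} \bullet \{f_k, f_l\}(z)$, whose components are $r_j(z) = \partial^{\gamma_j} \bullet \{f_k, f_l\}(z) = q_k(z)^\T (\mathcal{D}^{\gamma_j}\Omega)(z)\, q_l(z)$ by~\eqref{eq:pd2D}; by Lemma~\ref{lem:matFiniteness} this $r$ solves the Pfaffian system~\eqref{eq:PoiPfaff} on $U'$. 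Since $q_k(\bar{z}) = \bar{q}_k$ and $q_l(\bar{z}) = \bar{q}_l$, condition~\eqref{eq:finCond} says precisely that $r_j(\bar{z}) = 0$ for every $\gamma_j \in \Gamma$, i.e.\ $r(\bar{z}) = \bm{0}$. As the zero vector is trivially a solution of the same Pfaffian system with the same boundary value, the uniqueness assertion of Lemma~\ref{lem:finiteness} forces $r \equiv \bm{0}$ on $U'$.

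It remains to pass from $r \equiv \bm{0}$ to the full family~\eqref{eq:infCond}. The zeroth component already gives $\{f_k, f_l\} = r_0 \equiv 0$ on the open set $U'$, so every partial derivative of this function vanishes on $U'$; evaluating at $\bar{z}$ and applying~\eqref{eq:pd2D} yields $\bar{q}_k^\T (\mathcal{D}^\alpha\Omega)(\bar{z})\, \bar{q}_l = \partial^\alpha \bullet \{f_k, f_l\}(\bar{z}) = 0$ for all $\alpha \in \bm{Z}_{\ge 0}^{2n}$, which is~\eqref{eq:infCond}. (Alternatively, $r \equiv \bm{0}$ gives $q_k^\T(\mathcal{D}^{\gamma}\Omega)q_l \equiv 0$ for all $\gamma \in \Gamma$, and since these matrices span $\mathbb{M}$ over $\bm{R}(z)$ and $\mathcal{D}^\alpha\Omega \in \mathbb{M}$, expanding $\mathcal{D}^\alpha\Omega$ in this basis and evaluating at $\bar{z}$ gives the same conclusion.) I expect no real computational obstacle here; the one conceptual point that must be handled with care is the propagation step, namely recognizing that the pointwise data at $\bar{z}$ becomes a global identity through the uniqueness of Pfaffian solutions, which is exactly the mechanism that collapses the infinite system~\eqref{eq:infCond} onto the finite system~\eqref{eq:finCond}.
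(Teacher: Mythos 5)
Your proposal is correct and follows essentially the same route as the paper: both arguments combine the Pfaffian system~\eqref{eq:PoiPfaff} from Lemma~\ref{lem:matFiniteness} with the point-evaluation isomorphism (uniqueness of solutions given boundary data at $\bar{z}$) from Lemma~\ref{lem:finiteness} to conclude that $\mathcal{C} \bullet \{f_k, f_l\}$ vanishes at $\bar{z}$ if and only if $\{f_k, f_l\}$ vanishes identically on $U'$, which is exactly the equivalence between~\eqref{eq:finCond} and~\eqref{eq:infCond}. Your write-up merely makes explicit the steps the paper compresses into two sentences (attaching $f_k, f_l$ to arbitrary vectors via the isomorphisms, comparing with the zero solution, and propagating back to all derivatives via~\eqref{eq:pd2D}).
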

\begin{proof}
	From Lemma~\ref{lem:finiteness} and the Pfaffian system~\eqref{eq:PoiPfaff}, the Poisson bracket $\{f_k, f_l\}(z)$ is constantly equal to $0$ on $U'$ if and only if
	$(\mathcal{C} \bullet \{f_k, f_l\})(\bar{z}) = \bm{0} \in \bm{R}^t$ holds at any one point $\bar{z} \in U'$. 
	This implies that~\eqref{eq:infCond} holds if and only if~\eqref{eq:finCond} is valid for $\bar{q}_k = \psi_{\bar{z}} \circ \phi_{\mathcal{B}}(F_k)\ (k=1,\dots,n)$. 
\end{proof}

Theorem~\ref{thm:finiteness} ensures that we can determine $f_2, \dots, f_n \in \mathbb{S}$ that define a solution of the HJE with $h$ by finding $\bar{q}_2, \dots, \bar{q}_n \in \bm{R}^d$ that satisfy~\eqref{eq:finCond} and~\eqref{eq:projCond} with $\bar{q}_1 = \psi_{\bar{z}} \circ \phi_{\mathcal{B}}(h)$.  
More precisely, $f_k$ is specified as the analytic function that satisfies $q_k = \mathcal{B} \bullet f_k$ for the solution $q_k(z)$ of the Pfaffian system~\eqref{eq:Pfaff} defined by $A_1,\dots,A_{2n}$ with boundary condition $q_k(\bar{z}) = \bar{q}_k$. 
Although it is difficult to analytically solve the Pfaffian system, we can numerically evaluate $f_k$ and its derivatives using the holonomic gradient method (HGM)~\cite{Nakayama2011}. 
Solving~\eqref{eq:finCond} is easier than directly finding the first integrals that satisfy conditions~\eqref{eq:poisson} and~\eqref{eq:projective} by solving the Lagrange-Charpit system~\cite{Sakamoto2002}. 

It should be noted that for the solutions of algebraic equations~\eqref{eq:finCond} to exist, the dimension $d$ of $\bar{q}_k$ must be sufficiently larger than the number of equations $t$. 
There are more than one zero-dimensional ideals that annihilate the holonomic Hamiltonian, and the dimensions $d$ and $t$ depend on the choice of the ideal. 
It is a part of future work to clarify which zero-dimensional ideal yields $d$ sufficiently larger than $t$. 

\section{Derivation of Finite Equation Set}
To find the algebraic equations~\eqref{eq:finCond}, an appropriate finite set of multi-indices $\Gamma = \{\bm{0}, \gamma_1, \dots, \gamma_{t-1}\} \subset \bm{Z}_{\geq 0}^{2n}$ must be determined. 
In the proof of Lemma~\ref{lem:matFiniteness}, the multi-indices yield a basis $\Omega, \mathcal{D}^{\gamma_1}\Omega, \dots, \mathcal{D}^{\gamma_{t-1}}\Omega$ of the $t$-dimensional $\bm{R}(z)$-vector space $\mathbb{M}$. 
It should be noted that finitely many matrices in $\{\mathcal{D}^{\alpha}\Omega \mid \alpha \in \bm{Z}_{\ge 0}^{2n}\}$ can be recursively computed by applying $\mathcal{D}_i\ (i=1,\dots,2n)$ to $\Omega$. 
Moreover, for the finite set of matrices, the computation of its linearly independent subset is a straightforward process. 
However, we cannot determine whether the linearly independent set is sufficient to span $\mathbb{M}$ because dimension $t$ is unknown. 
In this section, a method is proposed to compute $\Gamma$, which provides a basis of $\mathbb{M}$, from a finite number of matrices $\mathcal{D}^{\alpha}\Omega$. 

For a finite number of matrices $\mathcal{D}^{\alpha_1} \Omega,\dots,\allowbreak\mathcal{D}^{\alpha_s}\Omega$, suppose that there exist several $\bm{R}(z)$-linear relations between them: 
\begin{multline}
	c_{j, \alpha_1}\mathcal{D}^{\alpha_1}\Omega + \cdots + c_{j, \alpha_s}\mathcal{D}^{\alpha_s}\Omega = 0 \\
	(c_{j, \alpha_i} \in \bm{R}(z);\ i=1,\dots,s;\ j=1,\dots,r). \label{eq:linEq}
\end{multline}
Equations~\eqref{eq:linEq} with~\eqref{eq:pd2D} indicate that any Poisson bracket $\{f, g\}$ with $f, g \in \mathbb{S}(\mathcal{P}_1,\dots,\mathcal{P}_{2n})$ is annihilated by differential operators
$\mathcal{Q}_{j} \coloneqq c_{j, \alpha_1}\partial^{\alpha_1} + \cdots + c_{j, \alpha_s}\partial^{\alpha_s}\ (j=1,\dots,r)$. 
If the ideal $I = \langle \mathcal{Q}_1,\dots,\mathcal{Q}_r \rangle$ is zero-dimensional, the quotient space $\bm{R}(z)\langle \partial \rangle / I$ is a finite dimensional $\bm{R}(z)$-vector space with a basis $\{1, \partial^{\beta_1}, \dots, \partial^{\beta_{t'-1}}\}$ comprising the standard monomials~\cite{Saito2000} of $I$.  
Hence, any $\partial^\alpha\ (\alpha \in \bm{Z}_{\ge 0}^{2n})$ can be expressed as a unique $\bm{R}(z)$-linear combination of the elements of the basis. 
This implies that the set of matrices $B \coloneqq \{\Omega, \mathcal{D}^{\beta_1}\Omega, \dots, \mathcal{D}^{\beta_{t'-1}}\Omega\}$ spans $\mathbb{M}$. 
If $B$ is linearly independent, dimension $t$ is given as $t = t'$ and $\Gamma = \{\bm{0}, \beta_1,\dots,\beta_{t'-1}\}$. 
If not, the minimal linearly independent subset of $B$ is a basis of $\mathbb{M}$. 
The procedure for computing $\Gamma$ is summarized in Algorithm~\ref{alg:finite}. 
\begin{algorithm}[t]
	\caption{Derivation of $\Gamma$ in~\eqref{eq:finCond} \label{alg:finite}}
	\begin{algorithmic}[1]
		\Require{Matrices of rational functions $\Omega, A_1, A_2, \dots, A_{2n} \in \bm{R}(z)^{d \times d}$}
		\Ensure{Finite set of multi-indices $\Gamma$}
		\State{Compute finite number of matrices $\mathcal{D}^\alpha \Omega$}\label{algline:getMat}
		\State{Find $\bm{R}(z)$-linear relations between them and corresponding differential operators $\mathcal{Q}_j\ (j=1,\dots,r)$}
		\If{$I = \langle \mathcal{Q}_1,\dots,\mathcal{Q}_r\rangle$ is not zero-dimensional}
			\State{Get back to line~\ref{algline:getMat}}
		\EndIf
		\State{Find a basis $\{1, \partial^{\beta_1}, \dots, \partial^{\beta_{t'-1}}\}$ of quotient space $\bm{R}(z) / I$}
		\If{$B = \{\Omega, \mathcal{D}^{\beta_{1}}\Omega, \dots, \mathcal{D}^{\beta_{t'-1}}\Omega\}$ is linearly independent}
			\State\Return{$\Gamma \gets \{1, \beta_1, \dots, \beta^{t'-1}\}$}
		\Else
			\State{Find minimal linearly independent subset $B' \subset B$}
			\State\Return{$\Gamma \gets \{\mbox{Multi-indices of matrices in }B'\}$}
		\EndIf
	\end{algorithmic}
\end{algorithm}

\section{Numerical Example}
Consider the HJE for the following Hamiltonian. 
\begin{equation}
	h(x, p) = -2p_1\sin(x_1) + 2x_2p_2 - ap_2^2 + bx_1^4, \label{eq:exHam}
\end{equation}
where $x = [x_1\ x_2]^\T$, $p = [p_1\ p_2]^\T$, and $a, b \in \bm{R}$. 
To apply the proposed method to this HJE, $h(x, p)$ should be verified to be holonomic. 
By virtue of Lemma~\ref{lem:closure}, it is sufficient to show that each term of $h(x, p)$ is a holonomic function. 
For example, after several differentiations, it was found that the first term $-2p_1\sin(x_1)$ is annihilated by the following four differential operators of the form~\eqref{eq:defHF}: $\mathcal{P}_1 = \partial_{x_1}^2 + 1$, $\mathcal{P}_2 = \partial_{x_2}$, $\mathcal{P}_3 = \partial_{p_1}$, and $\mathcal{P}_4 = p_2\partial_{p_2} - 1$. 
For the other terms, the appropriate differential operators were found by differentiating them several times, which guarantees that every term is holonomic and $h(x, p)$ is also holonomic from Lemma~\ref{lem:closure}. 
A zero-dimensional ideal that annihilates $h$ was determined by computing the intersection of the ideals for all terms (see~\cite{Oaku2003} for details). 
It should be noted that this computation of intersection can be performed independently of parameters $a$ and $b$. 

Using the symbolic computation of differential operators, a Pfaffian system satisfied by $h(x, p)$, i.e., the vector of differential operators $\mathcal{B}$ and matrices of rational functions $A_i(z) \in \bm{R}(z)\ (i=1,\dots,4)$ are obtained as 
\begin{align*}
	\mathcal{B} &= [1\ \partial_{p_2}\ \partial_{p_1}\ \partial_{x_2}\ \partial_{x_1}]^\T, \\
	A_{x_1} &= \begin{bmatrix}
		0 & 0 & 0 & 0 & 1 \\
		0 & 0 & 0 & 0 & 0 \\
		-\frac{4}{x_1p_1} & \frac{2p_2}{x_1p_1} & \frac{4}{x_1} & \frac{2x_2}{x_1p_1} & \frac{1}{p_1} \\
		0 & 0 & 0 & 0 & 0 \\
		\frac{12}{x_1^2} & -\frac{6p_2}{x_1^2} & \frac{-12p_1}{x_1^2} - p_1 & -\frac{6x_2}{x_1^2} & 0
	\end{bmatrix}, 
\end{align*}
where $A_2$, $A_3$, and $A_4$ are omitted owing to space limitations. 
From Lemma~\ref{lem:finiteness}, the analytic solutions of the Pfaffian system defined on a simply connected domain $U$ in $\{z \in \bm{R}^4 \mid x_1p_1p_2 \neq 0\}$ constitute the $5$-dimensional real vector space. 

The matrices $B_x, B_p \in \bm{R}(z)^{2 \times 5}$ and $\Omega \in \bm{R}(z)^{5 \times 5}$ are derived from $A_1,\dots,A_4$ as constant matrices:
\begin{multline*}
	B_x = \begin{bmatrix}
		0 & 0 & 0 & 0 & 1 \\
		0 & 0 & 0 & 1 & 0 
	\end{bmatrix},  
	B_p = \begin{bmatrix}
		0 & 0 & 1 & 0 & 0 \\
		0 & 1 & 0 & 0 & 0 
	\end{bmatrix}, 
\end{multline*}
and $\Omega = B_p^\T b_x - B_x^\T B_p \in \bm{R}(z)^{5 \times 5}$. 
Using Algorithm~\ref{alg:finite}, $\Gamma = \{(0,0,0,0), (1,0,0,0), (0,0,1,0)\}$ is obtained, which indicates from Theorem~\ref{thm:finiteness} that~\eqref{eq:infCond} is valid if and only if 
\begin{equation}
	\bar{q}_1^\T \Omega(\bar{z}) \bar{q}_2 = \bar{q}_1^\T \mathcal{D}_1 \Omega(\bar{z}) \bar{q}_2 = \bar{q}_1^\T \mathcal{D}_3\Omega(\bar{z}) \bar{q}_2 = 0 \label{eq:exFinCond}
\end{equation}
holds. 
In this example, the condition~\eqref{eq:projCond} was obtained as
\begin{equation}
	\det \left\{
	 B_p [\bar{q}_1\ \bar{q}_2] \right\} = \bar{q}_{1,3}\bar{q}_{2,2} - \bar{q}_{2,3}\bar{q}_{1,2} \neq 0. \label{eq:exProjCond}
\end{equation}
Consequently, any two vectors satisfying~\eqref{eq:exFinCond} and~\eqref{eq:exProjCond} yield the first integrals of the Hamiltonian system associated with~\eqref{eq:exHam} that define a solution of the HJE. 
It should be noted that all aforementioned computations can be performed symbolically and include no approximations. 

Based on the denominators of the components of $A_1,\dots,A_4$, let $U = \{z \in \bm{R}^4 \mid x_1 > 0, p_1 > 0, p_2 > 0\}$, and let $\mathbb{VS}(A_1,\dots,A_4)$ and $\mathbb{S}(\mathcal{P}_1, \dots, \mathcal{P}_4)$ be the solution spaces defined on it. 
For a fixed point $\bar{z} = [\pi/6\ 1\ b(\pi/6)^4\ 2/a]^\T$, which satisfies $h(\bar{z}) = 0$, we seek the other first integral $f_2$ that defines a solution of the HJE. 
The constant vector $\bar{q}_1$ is computed as 
\[
	\bar{q}_1 = (\mathcal{B} \bullet h)(\bar{z}) 
	= \left[ 0\ -2\ -1\ \frac{4}{a}\ -\frac{b\pi^3(\sqrt{3}\pi - 24)}{1296}\right]^\T \in \bm{R}^5
\]
For any constant vector $\bar{q}_2$ that satisfies~\eqref{eq:exFinCond} and~\eqref{eq:projCond} with $\bar{q}_1$, the other first integral that defines a solution of the HJE is obtained as $f_2 = \phi_{\mathcal{B}}^{-1} \circ \psi_{\bar{z}}^{-1}(\bar{q}_2)$. 
An appropriate $\bar{q}_2 \in \bm{R}^5$ satisfying~\eqref{eq:exFinCond} and~\eqref{eq:exProjCond} can be selected because the number of equations is $3$. 
For example, $\bar{q}_2 = [0\ a\ 0\ -2\ 0]^\T$ satisfies both conditions. 
Although it is difficult to compute its explicit expression, $f_2$ is uniquely characterized as the first component of the solution of the Pfaffian system defined by $A_1,\dots,A_4$ with boundary condition $Q(\bar{z}) = \bar{q}_2$ and can be evaluated using the HGM~\cite{Nakayama2011}. 

\section{Conclusion}
In this paper, the solutions of the HJE with holonomic Hamiltonian were investigated. 
A solution of the HJE is characterized by a finite number of the first integrals of the corresponding Hamiltonian system. 
Under the assumption that the Hamiltonian is holonomic, the finite dimensional solution space of the Pfaffian system associated with the holonomic Hamiltonian can be used to obtain the condition for the other first integrals to characterize a solution of the HJE. 
We obtained the condition as a finite number of algebraic equations, which can be computed through the symbolic computation of differential operators. 
Although the algebraic equations do not necessarily have a solution, they can be easily solved if at least one solution exists. 

Future work directions of this study include the investigations related to the range of problems or conditions on the Hamiltonian to guarantee the existence of solutions of the algebraic equations. 
When the HJE is considered in control theory, a particular solution of the HJE called the stabilizing solution is required. 
The characterization of the stabilizing solution in terms of holonomic functions, zero-dimensional ideals, and Pfaffian systems will also be considered in future work.


\begin{thebibliography}{xxx}


\bibitem{Bryson1975}J.~A.~E. Bryson and Y.-C. Ho, \textit{{Applied Optimal
  Control}}, 1st edition, John Wiley \& Sons, 1975.

\bibitem{Anderson1979}B.~D.~O. Anderson and J.~B. Moore, \textit{{Optimal
  Filtering}}, Prentice-Hall, 1979.

\bibitem{Schaft1992}A.~J. van~der Schaft, ``{L2-gain analysis of nonlinear
  systems and nonlinear state feedback Hinf control},'' \textit{IEEE
  Trans. Automat. Contr.}, vol. 37, no. 6, pp. 770--784, 1992.

\bibitem{Scherpen1993}J.~M. Scherpen, ``{Balancing for nonlinear systems},''
  \textit{Syst. Control Lett.}, vol. 21, no. 2, pp. 143--153, 1993.

\bibitem{Lukes1969}D.~L. Lukes, ``{Optimal regulation of nonlinear dynamical
  systems},'' \textit{SIAM J. Control}, vol. 7, no. 1, pp. 75--100,
  1969.

\bibitem{Beard1997}R.~W. Beard, G.~N. Saridis, and J.~T. Wen, ``{Galerkin
  approximations of the generalized Hamilton-Jacobi-Bellman equation},''
  \textit{Automatica}, vol. 33, no. 12, pp. 2159--2177, 1997.

\bibitem{Ito2020}Y.~Ito, K.~Fujimoto, and Y.~Tadokoro, ``{Kernel-based
  Hamilton-Jacobi equations for data-driven optimal and H-infinity control},'' 
  \textit{IEEE Access}, vol. 8, pp. 131047--131062, 2020.

\bibitem{Hao2020}Z.~Hao, K.~Fujimoto, and Q.~Zhang, ``{Approximate solutions to 
the Hamilton-Jacobi equations for generating functions},'' 
\textit{J. Syst. Sci. Complex.}, vol. 33, no. 2, pp. 261--288, 2020.

\bibitem{Sakamoto2002}N.~Sakamoto, ``{Analysis of the Hamilton-Jacobi equation
  in nonlinear control theory by symplectic geometry},'' \textit{SIAM J. 
  Control Optim.}, vol. 40, no. 6, pp. 1924--1937, 2002.

\bibitem{Ohtsuka2011}T.~Ohtsuka, ``{Solutions to the Hamilton-Jacobi equation
  with algebraic gradients},'' \textit{IEEE Trans. Automat. 
  Contr.}, vol. 56, no. 8, pp. 1874--1885, 2011.

\bibitem{Kawano2013a}Y.~Kawano and T.~Ohtsuka, ``{Algebraic solutions to the
  Hamilton-Jacobi equation with the time-varying Hamiltonian},'' \textit{SICE
  J. Control Meas. Syst. Integr.}, vol. 6, no. 1, pp.
  28--37, 2013.

\bibitem{Coutinho1995}S.~C. Coutinho, \textit{{A Primer of Algebraic
  D-Modules}}, Cambridge University Press, 1995.

\bibitem{Saito2000}M.~Saito, B.~Sturmfels, and N.~Takayama, \textit{Gr\"{o}bner 
Deformations of Hypergeometric Differential
  Equations}, Springer-Verlag, 2000.

\bibitem{Iori2021}T.~Iori, ``Study on solution method to Hamilton-Jacobi equation with holonomic Hamiltonian,'' 
in \textit{Proc. 64th Japan Joint Automat. Contr. Conf.} (in Japanese), Kyoto, Japan, 2021, pp.~562--567. 

\bibitem{Oaku2003}T.~Oaku, Y.~Shiraki, and N.~Takayama, ``{Algebraic algorithms
  for D-Modules and numerical analysis},'' \textit{Computer Mathematics
  (Proceedings of ASCM 2003)}, vol. 10, pp.~23--39, 2003.

\bibitem{Kauers2011}M.~Kauers and P.~Paule, \textit{{The Concrete
  Tetrahedron}}, 1st edition, Springer-Verlag, 2011.

\bibitem{Nakayama2011}H.~Nakayama, K.~Nishiyama, M.~Noro, K.~Ohara, T.~Sei,
  N.~Takayama, and A.~Takemura, ``{Holonomic gradient descent and its
  application to the Fisher-Bingham integral},'' \textit{Adv. Appl. 
  Math.}, vol. 47, no. 3, pp.~639--658, 2011.

\bibitem{Kume2018}A.~Kume and T.~Sei, ``{On the exact maximum likelihood
  inference of Fisher-Bingham distributions using an adjusted
  holonomic gradient method},'' \textit{Stat. Comput.}, vol. 28,
  no. 4, pp.~835--847, 2018.

\bibitem{Brehard2019}F.~Br{\'{e}}hard, M.~Joldes, and J.-B. Lasserre, ``{On a
  moment problem with holonomic functions},'' in \textit{Proc. 
  44th Int. Symp. Symb. Algebr. Comput.}, pp.~66--73, 2019.

\bibitem{Hibi2013}T.~Hibi, ed., \textit{{Gr{\"{o}}bner Bases: Statistics and
  Software Systems}}, 1st edition, Springer Japan, 2013.

\bibitem{Kauers2013}M.~Kauers, ``{The holonomic toolkit},'' in C.~Schneider and
  J.~Bluemlein, eds., \textit{Computer Algebra in Quantum Field Theory,
  Integration, Summation, and Special Functions}, Springer-Verlag, pp.~119--144, 2013.

\end{thebibliography}
\end{document}